\newtheorem{theorem}{Theorem}
\algnewcommand\algorithmicforeach{\textbf{Until :}}
\algnewcommand\algorithmicendif{\textbf{End}}
\algnewcommand\ForEach{\item[ \algorithmicforeach]}
\algnewcommand\EndiFF{\item[ \algorithmicendif]}
\begin{document}
%
\title{Robust Beamforming for Cache-Enabled Cloud Radio Access Networks}

\author{Oussama Dhifallah,~\IEEEmembership{Student Member,~IEEE},~Hayssam Dahrouj,~\IEEEmembership{Senior Member,~IEEE},~Tareq Y.~Al-Naffouri,~\IEEEmembership{ Member,~IEEE},~Mohamed-Slim Alouini,~\IEEEmembership{Fellow,~IEEE}\vspace{-8mm}
}


\maketitle

\begin{abstract}
Popular content caching is expected to play a major role in efficiently reducing backhaul congestion and achieving user satisfaction in next generation mobile radio systems. Consider the downlink of a cache-enabled cloud radio access network (CRAN), where each cache-enabled base-station (BS) is equipped with limited-size local cache storage. The central computing unit (cloud) is connected to the BSs via a limited capacity backhaul link and serves a set of single-antenna mobile users (MUs). This paper assumes that only imperfect channel state information (CSI) is available at the cloud. It focuses on the problem of minimizing the total network power and backhaul cost so as to determine the beamforming vector of each user across the network, the quantization noise covariance matrix, and the BS clustering subject to imperfect channel state information, per-BS power constraint, and fixed cache placement assumption. The paper suggests solving such a difficult, non-convex optimization problem using the semi-definite relaxation (SDR) and the S-procedure. The paper uses the $\ell_0$-norm approximation to provide a stationary point using the majorization-minimization (MM) approach. Simulation results show how the cache-enabled network significantly improves the backhaul cost especially at high signal-to-interference-plus-noise ratio (SINR) values as compared to conventional cache-less CRANs.
\end{abstract}
\vspace{-4mm}
\IEEEpeerreviewmaketitle
\section{Introduction}
Cloud radio access network (CRAN) is recognized as a promising network architecture to meet the tremendous increase in data traffic for future networks \cite{CRAN_Andrews, ben_lateiaf, Hayssam_mag}. In CRANs, a central computing unit (cloud) is connected to several BSs through backhaul links which allow joint signal processing of user signals. This allows for effective interference management and significant energy consumption reduction. To meet the increasing demands in data traffic, an increase in the network density is expected, which adds stringent constraints on the backhaul load. This papers addresses the backhaul cost and congestion problem by means of beamforming in cache-enabled networks.

Consider the downlink of a cache-enabled CRAN, where each BS is equipped with a local memory with limited-size. Each cache-enabled BS is connected to the cloud via limited-capacity backhaul link. The central computing unit serves a set of pre-known single-antenna MUs. The paper accounts for imperfect CSI at the cloud and of the quantization noise induced by the employed compression schemes due to the capacity-limited backhaul links. It then optimizes the performance of the system, which is a function of the compression scheme and the beamforming vectors of each user across the network.

The problem considered in this paper is related to the recent state-of-art on CRANs. References \cite{ouss_glob1, weiyu_ds_cs, weiyu_relx, oussama} consider a CRAN scenario and allow all BSs across the network to fetch the requested data from the cloud. In dense data networks, however, such assumption may no longer be feasible because of the high-cost of, and the difficulties in, deploying proliferated high-capacity backhaul networks. As a result, the backhaul congestion is expected to become a limiting factor in futuristic dense networks performance. To truly access the advantages harvested by CRAN, this paper considers the more practical scenario of a cache-enabled CRAN where each BS is equipped with a local memory and connected to the cloud via capacity-limited backhaul links. Such a scenario is particularly related to reference \cite{weiyu_cache}, which addresses a multicast cache-enabled CRAN and formulates the total network cost minimization problem in order to find the beamforming vectors and BS clustering. However, reference \cite{weiyu_cache} assumes perfect CSI available at the cloud and neglects the effect of the employed compression schemes. The problem considered in the current paper is also related to the works in \cite{weiyu_ds_cs,weiyu_relx,oussama}. Reference \cite{weiyu_ds_cs} assumes a conventional (cache-less) CRAN and focuses on solving the total network power minimization problem for both the data-sharing strategy and compression strategy. Reference \cite{weiyu_relx} focuses on solving the utility maximization problem for both the dynamic and static BS clustering. Reference \cite{oussama} assumes hybrid connections between the cloud and the BSs and focuses on minimizing the total network power minimization to jointly design the beamforming vectors, the quantization noise covariance matrix, and the transmit power. The works in \cite{weiyu_ds_cs}, \cite{weiyu_relx} and \cite{oussama}, however, assume a conventional CRAN where all the BSs fetch the requested data from the cloud, and that perfect CSI is available at the cloud, unlike the current paper.

This paper considers a cache-enabled CRAN where the popular content is cached in the local storage of the BSs. When the content requested by a MU is available in the local cache, it is directly transmitted from the cache to the user with no need for backhauling and compression. When the data is not available in the local cache, the BS fetches the data from the cloud which performs joint precoding and compression. Such pre-processing of the data introduces quantization noise, which degrades the system performance. The paper formulates the problem of minimizing the total network power and backhaul cost subject to per-BS power constraint, quality of service constraints, per-BS backhaul capacity constraint, and imperfect CSI constraints. The paper highlight is that it solves such a non-convex  optimization problem using the semi-definite relaxation (SDR) \cite{Zhengrobustlinear}, and the S-Procedure method \cite{convex_opt}. The proposed algorithm then utilizes one particular approximation of the $l_0$ norm to produce a stationary point using the majorization-minimization (MM) approach \cite{mm_algo}. Simulation results particularly show how the cache-enabled network significantly improves the backhaul cost as compared to conventional cache-less CRANs especially at high SINR values.
\vspace{-2mm}
\section{System Model and Problem }
\label{sec:un}
\subsection{System Model}
Consider the downlink of a cache-enabled CRAN, where the cloud is connected to $B$ BSs through capacity-limited backhaul links. Each BS serves $U$ single-antenna MUs, and is equipped with a local cache with limited storage. For the simplicity of analysis, we assume that each BS is equipped with single-antenna, all the files have the same size, and the knowledge of cache and request status are known at the cloud. 


Let $\mathcal{B}=\lbrace 1,..,B\rbrace$ be the set of BSs connected to the cloud, and $\mathcal{U}=\lbrace 1,..,U\rbrace$ be the set of users across the network. Let the beamforming vector from the set of all BSs $\mathcal{B}$ to user $u$ ($u \in \mathcal{U}$) be ${\bf w}_{u}=[w_{1u},~w_{2u},..,~w_{Bu}]^T  \in \mathbb{C}^{B}$, and let the channel vector from $\mathcal{B}$ to $u$ be ${\bf h}_{u}=[h_{1u},~h_{2u},..,~h_{Bu}]^T  \in \mathbb{C}^{B}$. Further, let the quantization noise vector be ${\bf v}$ $=[v_{1},~v_{2},\cdots,~v_{B}] \in \mathbb{C}^{B}$, where ${\bf v}$ is non-uniform white Gaussian with diagonal covariance matrix ${\bf Q}$ $\in \mathbb{C}^{B\times B}$ of diagonal entries $q_{b}$ ($q_{b} \geq 0,~\forall b\in\mathcal{B}$). Each cache-enabled BS $b$ has a local storage with size $F_b$, which allows to describe the cache placement through the cache placement matrix ${\bf P}\in\mathbb{R}^{B\times F}$, where $p_{bf}=1$ when the content $f$ is cached at BS $b$, and $p_{bf}=0$ otherwise, i.e. $\sum_{f=1}^{F}p_{bf} < F_b,~\forall b\in\mathcal{B}$. The matrix ${\bf P}$ is kept fixed throughout the current paper. \footnote{Optimizing the cache placement is left for future work, as it increases the complexity of our complex problem.} The received signal $y_u \in \mathbb{C}$ at user $u$ can be written as follows:
\begin{align}
y_u&={\bf h}_{u}^{H}{\bf w}_{u}s_u+ \sum\limits_{u^\prime\in \mathcal{U}_{u}}^{}~{\bf h}_{u}^{H}{\bf w}_{u^\prime }s_{u^\prime}+\sum\limits_{b\in \mathcal{B}}^{}~ \alpha_b h_{bu}^{*}v_{b}+n_u,
\label{re_wav}
\end{align}
where $\mathcal{U}_{u}=\mathcal{U}\setminus \lbrace u\rbrace$, $s_u$ denotes the transmitted data symbol for user $u$, $n_u~\sim~\mathcal{CN}(0,\sigma_u^2)$ denotes the additive white Gaussian noise which is independent from the transmitted data symbols $s_u$ and the quantization noise $v_{b}$, and where $\alpha_b=0$ when the requested content  by all the users across the network is cached at BS $b$, and $1$ otherwise, since if the content $f_u$ requested by user $u$ from BS $b$ is available in the local cache of BS $b$, the BS transmits $f_u$ directly without backhauling and compression. When $f_u$ is not in the cache of BS $b$, it needs to be retrieved from the cloud. The focus of this paper is on the scenario when all the BSs across the network cache the same popular data and each user request the same data from the active BSs.

\subsection{Backhaul Cost and Power Model}
This paper considers the problem of minimizing the total network cost, which consists of the backhaul cost of fetching the contents from the cloud, in addition to the network power consumption. Firstly, the backhaul cost of fetching the contents from the cloud is assumed to be proportional to the transmission rate of user $u$ since, without loss of generality, the data rates of fetching the requested data from the cloud need to be as large as the content delivery. The backhaul cost associated with an active BS $b$ serving user $u$ can therefore be expressed as follows:
\begin{align}\label{back_cost}
B_{bu} = \begin{cases}
\textnormal{log}_2(1+\delta_u) &\mbox{if } p_{bf_u}= 0~\mbox{and}~|w_{bu}|^2 > 0 \\
0 & \mbox{Otherwise }
\end{cases},
\end{align}
where $\delta_u$ denotes the target SINR to be achieved by user $u$, and $f_u$ is the file content requested by user $u$.

Since each active BS is assumed to serve all the users across the network, the power consumption of BS $b$ can be written as:
\begin{align}
\hspace{-3mm} P_b = \begin{cases}
\frac{P_{b,t}}{\nu_b} +P_{b,a} &\mbox{if } P_{b,t} > 0 \\
P_{b,s} & \mbox{if } P_{b,t}=0
\end{cases},
\end{align}
where $P_{b,t}=\left(\sum_{u\in\mathcal{U}}^{}|w_{bu}|^2+\alpha_b q_{b}\right)$ denotes the transmit power, $\nu_b$ denotes the power amplifier efficiency, $P_{b,a}$ denotes the power consumed by BS $b$ in the active mode, and $P_{b,s}$ denotes the power consumed by BS $b$ in the sleep mode. The total power consumption across the network can then be written as:
\begin{align}\label{tot_pow}
\hspace{-2mm} P_t=\sum\limits_{b\in\mathcal{B}}^{} \left\lbrace \frac{\sum\limits_{u\in\mathcal{U}}^{}|w_{bu}|^2+\alpha_b q_{b}}{\nu_b}+ \Big|\Big| P_{b,t} \Big|\Big|_0 P_{rb}+P_{b,s}\right\rbrace,
\end{align}
where $P_{rb}=P_{b,a}-P_{b,s}$ denotes the relative power consumption.

\subsection{Problem Formulation}
The problem considered in this paper consists of minimizing the total network cost subject to per-BS power constraints, per-BS backhaul capacity constraints, quality of service constraints, and CSI error constraints. The paper assumes that the channel errors are bounded by an elliptical region. The true channel vector ${\bf h}_{u}$ of user $u$ can therefore be written as follows:
\begin{equation}
\begin{aligned}\label{csi_err}
{\bf h}_{u}=\tilde{\bf h}_{u}+{\bf e}_{u},~\forall~u\in \mathcal{U},
\end{aligned}
\end{equation}
where ${\bf e}_{u}$ denotes the CSI error vector of user $u$ that satisfies the following elliptical constraints:
\begin{equation}
\begin{aligned}\label{ell_set}
{\bf e}_{u}^H{\bf E}_{u}{\bf e}_{u} < 1,
\end{aligned}
\end{equation}
where $\tilde{\bf h}_{u}$ denotes the estimated channel vector, and ${\bf E}_{u}$ is a known positive definite matrix that best measures the accuracy of the CSI.  Assume that the user symbols have unit power, i.e., $E(|s_u|^2 )= 1$,$~\forall~u\in \mathcal{U}$, and are independent from each other, from the quantization noise, and from the additive noise. The SINR of user $u$ can then be written as follows:
\begin{equation}
\begin{aligned}
\Gamma_u=\frac{\left|{\bf h}_{u}^{H}{\bf w}_{u} \right|^2}{\sum\limits_{u^\prime\in \mathcal{U}_{u}}^{}\left|{\bf h}_{u}^{H}{\bf w}_{{u^\prime}}\right|^2+\sum\limits_{b\in \mathcal{B}}^{}~ \alpha_b |h_{bu}|^2 q_{b}+\sigma_u^2}.
	\label{sinr}
\end{aligned}
\end{equation}
The optimization problem considered in this paper is also subject to per-BS power constraint which can be written as follows:
\begin{equation}
\begin{aligned}
\sum\limits_{u\in\mathcal{U}}^{}~|w_{bu}|^2+\alpha_b q_{b} \leq P_{b},~\forall~b\in \mathcal{B}.
	\label{power_const}
\end{aligned}
\end{equation}
If the requested data from a MU is not available in the local cache of the BS, the data is fetched from the cloud through the limited-capacity backhaul link by means of compression, which is assumed to be independent among users for the sake of simplicity. If the requested data is available at the BS local cache, no compression is needed. Therefore, the beamforming vector associated with user $u$, the quantization noise level $q_{b}$, and the backhaul capacity of BS $b$, $C_b$, are related as follows:
\begin{equation}
\label{rate_distortion1}
\textnormal{log}_2\Bigg(1+\frac{\sum\limits_{u\in\mathcal{U}}^{}(1-p_{bf_u})|w_{bu}|^2}{q_{b}}\Bigg) \leq C_b.
\end{equation}
This paper minimizes the total network cost, denoted by $C_N$, which consists of the total power consumption (\ref{tot_pow}), and the backhaul cost (\ref{back_cost}):
\begin{align}
&C_N=\sum\limits_{b\in\mathcal{B}}^{}\Bigg\lbrace \frac{\sum\limits_{u\in\mathcal{U}}^{}|w_{bu}|^2+\alpha_b q_{b}}{\nu_b}+\left|\left|\sum\limits_{u\in\mathcal{U}}^{}|w_{bu}|^2+\alpha_b q_{b}\right|\right|_0 P_{rb}\nonumber\\	
&+\sum\limits_{u\in\mathcal{U}}^{}\Big|\Big||w_{bu}|^2\Big|\Big|_0 \left( 1-p_{bf_u}\right)R_u \Bigg\rbrace
\end{align}
where $R_u=\textnormal{log}_2(1+\delta_u)$ denotes the target rate of user $u$.
The overall optimization problem can therefore be formulated as:
\begin{align}\label{glob_form}
\underset{{\bf w},{\bf q}}{\operatorname{min}}&~~~~~~C_N\\
\text{subject to}&~~~~~~\mathrm{Constraints}~(\ref{csi_err}),(\ref{ell_set}),~(\ref{power_const}),~\mathrm{and}~(\ref{rate_distortion1})\nonumber\\
&\tilde{\Gamma}_u=\frac{\left|{\bf h}_{u}^{H}{\bf w}_{u} \right|^2}{\sum\limits_{u^\prime\in \mathcal{U}_{u}}^{}\left|{\bf h}_{u}^{H}{\bf w}_{{u^\prime}}\right|^2+{\bf h}_{u}^{H}{\bf Q}{\bf h}_{u}+\sigma_u^2} \geq \delta_u,~\forall~u, \nonumber
\end{align}
where the optimization is over the beamforming vectors ${\bf w}$ and the quantization noise vector ${\bf q}$, and where the SINR expressions $\tilde{\Gamma}_u$ in (\ref{glob_form}) are equivalent to the expressions in (\ref{sinr}). Problem (\ref{glob_form}) is difficult to solve due to the non-convexity of the cost function and the infinite set of possible CSI errors. This paper proposes solving (\ref{glob_form}) using the SDR and the S-Procedure methods \cite{Zhengrobustlinear,convex_opt}. It uses one $\ell_0$-norm approximation to provide a stationary point using the MM approach. Simulations results suggest that the proposed algorithm provides a significant performance improvement as compared to cache-less networks.
\section{Algorithm}
\label{sec:deux}
\subsection{Semi-Definite Programming (SDP) Reformulation}
Define the rank-one matrix ${\bf W}_{u}$ as ${\bf W}_{u}={\bf w}_{u}{\bf w}^{H}_{u}$, $\forall~u\in\mathcal{U}$. The S-Procedure \cite{convex_opt} and the rank-one SDR approach \cite{Zhengrobustlinear} are then used to derive the steps of our proposed algorithm. After dropping the non-convex rank-one constraints, the minimization problem (\ref{glob_form}) can be reformulated as follows:
\begin{align}\label{sdp_form}
&\underset{{\bf W}_{u},{\bf Q},\lambda_{u}}{\operatorname{min}}~~~~~~~~\hat{C}_N\\
&\text{subject to}\sum_{u\in\mathcal{U}}^{}~\mathrm{Tr}\left({\bf A}_{b}{\bf W}_{u}\right)+\alpha_b \mathrm{Tr}\left({\bf A}_{b}{\bf Q}\right) \leq P_{b}\nonumber\\
&~~~~\sum_{u\in\mathcal{U}}^{}(1-p_{bf_u})\mathrm{Tr}\left({\bf A}_{b}{\bf W}_{u}\right)-(2^{C_b}-1) \mathrm{Tr}\left({\bf A}_{b}{\bf Q}\right) \leq 0\nonumber\\
&~~~~~{\boldsymbol \Delta}_{u} \succeq 0,~\lambda_{u} \geq 0,{\bf W}_{u} \succeq 0,{\bf Q} \succeq 0,{\bf Q}~\mathrm{is~diagonal},~\forall~u, \nonumber
\end{align}
\begin{figure*}
\begin{align} \label{cost_sdp}\small
\hat{C}_N&=\sum\limits_{b\in\mathcal{B}}^{}\Bigg\lbrace \frac{\sum\limits_{u\in\mathcal{U}}^{}\mathrm{Tr}\left({\bf A}_{b}{\bf W}_{u}\right)+\mathrm{Tr}\left({\bf A}_{b}{\bf Q}\right)}{\nu_b}+\Bigg|\Bigg|\sum\limits_{u\in\mathcal{U}}^{}\mathrm{Tr}\left({\bf A}_{b}{\bf W}_{u}\right)+\mathrm{Tr}\left({\bf A}_{b}{\bf Q}\right)\Bigg|\Bigg|_0 P_{rb}+\sum\limits_{u\in\mathcal{U}}^{}\Big|\Big|\mathrm{Tr}\left({\bf A}_{b}{\bf W}_{u}\right)\Big|\Big|_0 R_u \left( 1-p_{bf_u}\right)\Bigg\rbrace.
\end{align}
\hrulefill
\end{figure*}
\hspace{-2.4mm} where the optimization is over the beamforming matrices ${\bf W}_{u}$, the quantization noise covariance matrix ${\bf Q}$, and the introduced variables $\lambda_{u}$, and where $\hat{C}_N$ is given in (\ref{cost_sdp}), $\hat{R}^{m}_b=\gamma^{m}_b \tilde{R}^{m}_b$, $\hat{a}^{m}_b=\frac{a^{m}_b}{\gamma^{m}_b}$, ${\bf A}_{b}$ denotes the diagonal matrix with $1$ at the main diagonal entry $b$ and zeros otherwise, and the matrix ${\boldsymbol \Delta}_{u}$ is defined as:
\begin{align}
{\boldsymbol \Delta}_{u}=
\begin{bmatrix}
    {\bf G}_{u}+\lambda_{u}{\bf E}_{u} & {\bf G}_{u}\tilde{\bf h}_{u}  \\
    \tilde{\bf h}^{H}_{u}{\bf G}_{u} & \tilde{\bf h}^{H}_{u}{\bf G}_{u}\tilde{\bf h}_{u}-\sigma_u^2-\lambda_{u}
\end{bmatrix},
\end{align}
and where
\begin{align}
{\bf G}_{u}=\frac{1}{\delta_u}{\bf W}_{u}-\sum\limits_{u^\prime\in \mathcal{U}_{u}}^{}{\bf W}_{u^\prime}-{\bf Q}.
\end{align}
While the feasibility set of the optimization problem (\ref{sdp_form}) is convex, the cost function (\ref{cost_sdp}) is still not convex. The paper, therefore, proposes determining an approximate solution to the relaxed optimization problem (\ref{sdp_form}) by first approximating the $\ell_0$-norm and then by using the MM algorithm.

\subsection{Majorization-Minimization Approach}
This section first suitably approximates the cost function (\ref{cost_sdp}) so as to pave the way for the MM algorithm steps. Consider the following $\ell_0$-norm approximation:
\begin{align}\label{lim_app}
||x||_0=I(x>0)=\lim_{\epsilon\to 0} \frac{\textnormal{log}\left(1+\epsilon^{-1}x\right)}{\textnormal{log}(1+\epsilon^{-1})},~x \geq 0.
\end{align}
The cost function (\ref{cost_sdp}) can then be approximated as follows:
\begin{align}\label{obj_app}
&\tilde{C}_N=\sum\limits_{b\in\mathcal{B}}^{}\Bigg\lbrace \frac{1}{\nu_b}\left(\sum\limits_{u\in\mathcal{U}}^{}\mathrm{Tr}\left({\bf A}_{b}{\bf W}_{u}\right)+\mathrm{Tr}\left({\bf A}_{b}{\bf Q}\right) \right)\nonumber\\
&+\lambda_{\epsilon}\textnormal{log}\left(1+\epsilon^{-1}\left\lbrace \sum\limits_{u\in\mathcal{U}}^{}\mathrm{Tr}\left({\bf A}_{b}{\bf W}_{u}\right)+\mathrm{Tr}\left({\bf A}_{b}{\bf Q}\right) \right\rbrace \right) P_{rb}\nonumber\\
&+\lambda_{\epsilon}\sum\limits_{u\in\mathcal{U}}^{}\textnormal{log}\left(1+\epsilon^{-1} \mathrm{Tr}\left({\bf A}_{b}{\bf W}_{u}\right) \right) R_u \left( 1-p_{bf_u}\right)\Bigg\rbrace.
\end{align}
The MM algorithm can now be used to find a stationary point to the obtained optimization problem, i.e., the problem obtained by replacing $\hat{C}_N$ with $\tilde{C}_N$ in (\ref{sdp_form}), by first finding a surrogate function that majorizes $\tilde{C}_N$. Then, it iteratively minimizes the obtained function until a local optimal solution of the optimization problem with cost function (\ref{obj_app}) is reached.
\begin{figure*}
\begin{align} \label{const_prev}\small
&\eta^{m}_b=\frac{1}{\nu_b}+\frac{\lambda_{\epsilon}P_{rb}}{\epsilon+\sum\limits_{u\in\mathcal{U}_{b}}^{}\mathrm{Tr}\left({\bf A}_{b}{\bf W}^{m}_{u}\right)+\mathrm{Tr}\left({\bf A}_{b}{\bf Q}^{m}\right)},~~\beta_{bu}^{m}=\frac{R_u(1-p_{bf_u})\lambda_{\epsilon}}{\epsilon+\mathrm{Tr}\left({\bf A}_{b}{\bf W}^{m}_{u}\right)}.
\end{align}
\hrulefill
\end{figure*}
\begin{theorem}\label{them_32}
The surrogate function that majorizes the function (\ref{obj_app}) at iteration $m+1$ is given by:
\begin{align}\small
&C_N^{m}({\bf W}_{u},{\bf Q})=\sum\limits_{b\in\mathcal{B}}^{}\eta^{m}_b \left(\sum\limits_{u\in\mathcal{U}}^{}\mathrm{Tr}\left({\bf A}_{b}{\bf W}_{u}\right)+\mathrm{Tr}\left({\bf A}_{b}{\bf Q}\right) \right)\nonumber\\
&+\sum\limits_{b\in\mathcal{B}}^{}\sum\limits_{u\in\mathcal{U}}^{}\beta_{bu}^{m}\mathrm{Tr}\left({\bf A}_{b}{\bf W}_{u}\right)+c({\bf W}^{m}_{u},{\bf Q}^{m}).
\end{align}
where $\eta^{m}_b$, $\beta_{bu}^{m}$ are given in (\ref{const_prev}) and $c({\bf W}^{m}_{u},{\bf Q}^{m})$ is a constant which only depends on the matrices ${\bf W}^{m}_{u}$ and the quantization noise covariance matrix ${\bf Q}^{m}$ of the previous iteration.
\end{theorem}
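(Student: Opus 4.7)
The plan is to exploit the concavity of the natural logarithm to linearize the two $\log(1+\epsilon^{-1}\cdot)$ terms appearing in $\tilde{C}_N$, which is the standard move that produces a tight surrogate for the MM algorithm. For any $y\ge 0$, concavity of $y\mapsto\log(1+\epsilon^{-1}y)$ yields the first-order upper bound
\begin{equation*}
\log\!\left(1+\epsilon^{-1} y\right) \;\le\; \log\!\left(1+\epsilon^{-1} y^m\right) + \frac{y-y^m}{\epsilon + y^m},
\end{equation*}
with equality at $y=y^m$. Both logarithmic arguments in (\ref{obj_app}) are non-negative linear functions of the trace variables $\mathrm{Tr}({\bf A}_b{\bf W}_u)$ and $\mathrm{Tr}({\bf A}_b{\bf Q})$ (using ${\bf W}_u\succeq 0$, ${\bf Q}\succeq 0$ and the structure of ${\bf A}_b$), so this tangent inequality is valid throughout the feasible set of (\ref{sdp_form}).

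First I would apply the bound to each logarithm individually. For the term weighted by $\lambda_\epsilon P_{rb}$, the argument is $y=\sum_{u\in\mathcal{U}}\mathrm{Tr}({\bf A}_b{\bf W}_u)+\mathrm{Tr}({\bf A}_b{\bf Q})$; linearizing around $y^m=\sum_{u}\mathrm{Tr}({\bf A}_b{\bf W}_u^m)+\mathrm{Tr}({\bf A}_b{\bf Q}^m)$ produces a linear term in $({\bf W}_u,{\bf Q})$ whose coefficient, once combined with the pre-existing $1/\nu_b$ from the linear part of $\tilde{C}_N$, is exactly $\eta_b^m$ in (\ref{const_prev}). For the per-user term weighted by $\lambda_\epsilon R_u(1-p_{bf_u})$, I would linearize each $\log(1+\epsilon^{-1}\mathrm{Tr}({\bf A}_b{\bf W}_u))$ separately around $\mathrm{Tr}({\bf A}_b{\bf W}_u^m)$, which produces a linear term whose coefficient of $\mathrm{Tr}({\bf A}_b{\bf W}_u)$ is precisely $\beta_{bu}^m$.

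Then I would sweep every quantity that is not linear in the decision variables---the $\log(1+\epsilon^{-1}y^m)$ values and the shifts $-y^m/(\epsilon+y^m)$ arising from both linearizations, for every $b\in\mathcal{B}$ and every $u\in\mathcal{U}$---into a single scalar $c({\bf W}_u^m,{\bf Q}^m)$ depending only on the previous iterate. Summing over $b\in\mathcal{B}$ reproduces the claimed expression for $C_N^m({\bf W}_u,{\bf Q})$, and tightness of each tangent bound at the reference point gives $C_N^m({\bf W}_u^m,{\bf Q}^m)=\tilde{C}_N({\bf W}_u^m,{\bf Q}^m)$, so $C_N^m$ is a bona fide majorizing surrogate in the MM sense.

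The only real obstacle is bookkeeping: making sure every constant term is absorbed into $c(\cdot)$ without accidentally contaminating the $\eta_b^m$ and $\beta_{bu}^m$ coefficients, and verifying that the coefficient of the shared trace $\mathrm{Tr}({\bf A}_b{\bf Q})$ is collected only once (from the $1/\nu_b$ piece and the $P_{rb}$ logarithm, and not from the per-user logarithms, since those depend on ${\bf W}_u$ alone). There is no conceptually hard step, since concavity of the logarithm is elementary and all trace arguments are non-negative on the feasible set.
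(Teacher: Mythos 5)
Your proposal is correct and follows essentially the same route as the paper's proof: both exploit the concavity of $x\mapsto\log(1+\epsilon^{-1}x)$ to write tangent upper bounds on the two logarithmic terms around the previous iterate $({\bf W}_u^m,{\bf Q}^m)$, identify the resulting linear coefficients with $\eta_b^m$ and $\beta_{bu}^m$ from (\ref{const_prev}), and absorb all remaining iterate-dependent quantities into $c({\bf W}_u^m,{\bf Q}^m)$. Your explicit verification of tightness at the reference point is a small (and welcome) addition that the paper leaves implicit.
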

\begin{proof}
Let ${\bf W}^{m}_{u}$ and ${\bf Q}^{m}$ denote the beamforming matrix associated with user $u$ and the quantization noise covariance matrix of the previous iteration of the MM algorithm, respectively. Based on the fact that the function $x\rightarrow\textnormal{log}(1+\epsilon^{-1} x)$ is a concave function on the interval $[0~+\infty[$, for $\epsilon > 0$, we have
\begin{align}\label{con_log2_th21}
&\hspace{-2mm}\textnormal{log}\left(1+\epsilon^{-1}\left\lbrace \sum\limits_{u\in\mathcal{U}}^{}\mathrm{Tr}\left({\bf A}_{b}{\bf W}_{u}\right)+\mathrm{Tr}\left({\bf A}_{b}{\bf Q}\right) \right\rbrace \right)\nonumber\\
&\hspace{-2mm}\leq \frac{ \sum\limits_{u\in\mathcal{U}}^{}\mathrm{Tr}\left({\bf A}_{b}{\bf W}_{u}\right)+\mathrm{Tr}\left({\bf A}_{b}{\bf Q}\right)}{\epsilon+ \sum\limits_{u\in\mathcal{U}}^{}\mathrm{Tr}\left({\bf A}_{b}{\bf W}^{m}_{u}\right)+\mathrm{Tr}\left({\bf A}_{b}{\bf Q}^{m}\right)}+c_{b1}({\bf W}^{m}_{u},{\bf Q}^{m}),
\end{align}
and
\begin{align}\label{con_log2_th22}
\hspace{-3mm} \textnormal{log}\left(1+ \frac{\mathrm{Tr}\left({\bf A}_{b}{\bf W}_{u}\right)}{\epsilon} \right) \leq \frac{ \mathrm{Tr}\left({\bf A}_{b}{\bf W}_{u}\right) }{\epsilon+ \mathrm{Tr}\left({\bf A}_{b}{\bf W}^{m}_{u}\right) }+c_{bu2}({\bf W}^{m}_{u}),
\end{align}
where $c_{b1}({\bf W}^{m}_{u},{\bf Q}^{m})$ and $c_{bu2}({\bf W}^{m}_{u})$ are constants that only depend on the beamforming matrices ${\bf W}^{m}_{u}$ and the quantization noise covariance matrix ${\bf Q}^{m}$ of the previous iteration. Based on (\ref{con_log2_th21}) and (\ref{con_log2_th22}), we have
\begin{equation}
\begin{aligned}\label{con_log3_th23}
\tilde{C}_N \leq C_N^{m}({\bf W}_{u},{\bf Q}),
\end{aligned}
\end{equation}
where
\begin{align}\label{con_log3_th24}
c({\bf W}^{m}_{u},{\bf Q}^{m})&=\lambda_{\epsilon}\sum\limits_{b\in\mathcal{B}}^{}c_{b1}({\bf W}^{m}_{u},{\bf Q}^{m}) P_{rb}\nonumber\\
&+\lambda_{\epsilon}\sum\limits_{b\in\mathcal{B}}^{}\sum\limits_{u\in\mathcal{U}}^{}c_{bu2}({\bf W}^{m}_{u}) R_u \left( 1-p_{bf_u}\right).
\end{align}
This completes the proof of theorem \ref{them_32}.
\end{proof}

Using the above theorem, the MM approach solves the following optimization problem at the iteration $m+1$:
\begin{align}\label{MM_fin_opt}
&\underset{{\bf W}_{u},{\bf Q},\lambda_{u}}{\operatorname{min}}~~~~~~~~C_N^{m}({\bf W}_{u},{\bf Q})\\
&s.t.\sum\limits_{u\in\mathcal{U}}^{}~\mathrm{Tr}\left({\bf A}_{b}{\bf W}_{u}\right)+\alpha_b \mathrm{Tr}\left({\bf A}_{b}{\bf Q}\right) \leq P_{b},~\forall~b\in\mathcal{B}\nonumber\\
&~~~~\sum\limits_{u\in\mathcal{U}}^{}(1-p_{bf_u})\mathrm{Tr}\left({\bf A}_{b}{\bf W}_{u}\right)-(2^{C_b}-1) \mathrm{Tr}\left({\bf A}_{b}{\bf Q}\right) \leq 0\nonumber\\
&~~~~~{\boldsymbol \Delta}_{u} \succeq 0,~\lambda_{u} \geq 0,{\bf W}_{u} \succeq 0,{\bf Q} \succeq 0,{\bf Q}~\mathrm{is~diagonal},~\forall~u\in\mathcal{U} \nonumber
\end{align}
The above optimization problem is an SDP. Therefore, it can be solved using efficient numerical algorithms \cite{convex_opt}.

\vspace{-2mm}
\subsection{Proposed Iterative Algorithm}
The overall algorithm used to solve the original optimization problem now iterates between two levels. At the first level, it solves the optimization problem (\ref{MM_fin_opt}). Then, it updates $\eta^{m+1}_b$ and $\beta^{m+1}_b$. Such algorithm is guaranteed to converge as shown in the following theorem.
\begin{theorem}\label{them_33}
The proposed iterative algorithm is guaranteed to converge to a stationary point of the original optimization problem (\ref{glob_form}) as $\epsilon$ tends to 0.
\end{theorem}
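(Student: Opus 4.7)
The plan is to establish convergence in two stages: first, for each fixed smoothing parameter $\epsilon > 0$, invoke standard MM convergence theory on the smoothed SDP obtained by replacing $\hat{C}_N$ with $\tilde{C}_N$ in (\ref{sdp_form}); second, pass to the limit $\epsilon \to 0$ so that a stationary point of the smoothed problem becomes a stationary point of the original problem (\ref{glob_form}).

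For the first stage, I would verify that the surrogate $C_N^m$ constructed in Theorem \ref{them_32} satisfies the three defining MM conditions with respect to $\tilde{C}_N$. Majorization is exactly inequality (\ref{con_log3_th23}). Tangency at the current iterate, namely $C_N^m({\bf W}^m_u,{\bf Q}^m) = \tilde{C}_N({\bf W}^m_u,{\bf Q}^m)$, and gradient matching at that iterate both follow from the fact that (\ref{con_log2_th21}) and (\ref{con_log2_th22}) are the first-order tangent upper bounds of concave log terms, and so are tight with matching gradients at the expansion point. Because subproblem (\ref{MM_fin_opt}) is a convex SDP and because the per-BS power constraint bounds $\mathrm{Tr}({\bf A}_b {\bf W}_u)$ and $\mathrm{Tr}({\bf A}_b{\bf Q})$, the feasible set is compact. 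The classical convergence result of \cite{mm_algo} then yields that the cost sequence $\tilde{C}_N({\bf W}^m_u,{\bf Q}^m)$ is monotonically non-increasing and bounded below, hence convergent, and that every limit point of the iterates is a KKT point of the $\epsilon$-smoothed SDP.

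For the second stage, let $({\bf W}^\star_u(\epsilon),{\bf Q}^\star(\epsilon))$ denote any such limit point, which exists by compactness. Exploiting the pointwise limit (\ref{lim_app}) (so that $\lambda_\epsilon \log(1+\epsilon^{-1}x)$ reproduces the indicator $\|x\|_0$) and the $\epsilon$-independence of the SDP constraints in (\ref{MM_fin_opt}), I would extract a convergent subsequence as $\epsilon \to 0$ and show that its limit satisfies generalized KKT conditions for problem (\ref{sdp_form}). Since the rank-one SDR construction of Section \ref{sec:deux} is tight for this class of robust beamforming problems \cite{Zhengrobustlinear}, this limit corresponds to a stationary point of the original nonconvex problem (\ref{glob_form}).

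The main obstacle I anticipate lies in this $\epsilon \to 0$ transition. The $\ell_0$ terms in $\hat{C}_N$ are discontinuous, so stationarity of (\ref{glob_form}) must be interpreted in a Clarke (generalized) sense, and verifying that the gradient of $\tilde{C}_N$ at the MM fixed points converges to a valid element of the Clarke subdifferential of $\hat{C}_N$ requires care; one must show in particular that the weights $\eta_b^m$ and $\beta_{bu}^m$ in (\ref{const_prev}), scaled by $\lambda_\epsilon$, have the correct limiting behavior on components where the corresponding trace vanishes versus where it remains positive. The MM descent and tangent-bound arguments for fixed $\epsilon$ are routine once Theorem \ref{them_32} is in hand; the subtlety lies entirely in this limiting step.
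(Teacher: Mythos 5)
Your overall route is the same as the paper's: the paper's entire ``proof'' of Theorem~\ref{them_33} is the single sentence following the theorem statement, which asserts that the iteration is an MM algorithm and appeals to the generic convergence result in \cite{conv_mm}. Your first stage (descent, boundedness below, compactness of the feasible set via the per-BS power constraints, convergence of the cost sequence, and limit points being stationary for the $\epsilon$-smoothed SDP) is exactly the content of that citation, spelled out correctly; Theorem~\ref{them_32} supplies the majorization and the tangency/gradient-matching at the expansion point as you say. So for fixed $\epsilon>0$ your argument is sound and matches the paper.

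Two genuine issues remain, one of which you flag yourself and one of which you get wrong. First, your stage two is a plan rather than an argument: you correctly identify that passing $\epsilon\to 0$ through the discontinuous $\ell_0$ terms requires interpreting stationarity of (\ref{glob_form}) in a generalized (Clarke) sense and controlling the limiting behavior of the weights $\eta_b^m$ and $\beta_{bu}^m$ on components whose traces vanish, but you do not carry this out. To be fair, the paper does not carry it out either --- its one-line proof silently conflates stationarity of the $\epsilon$-smoothed relaxed problem with stationarity of (\ref{glob_form}) --- so your proposal makes explicit a gap that the paper glosses over rather than introducing a new one. Second, and more substantively, your final link relies on the claim that the rank-one SDR is tight for this class of robust beamforming problems. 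The paper explicitly concedes the opposite: it states that the rank-one relaxation ``might not always lead to a rank-one solution'' and resorts to Gaussian randomization, which produces a feasible approximation but not a certified stationary point of (\ref{glob_form}). Without rank-one-ness of the limit $({\bf W}_u^\star,{\bf Q}^\star)$, a stationary point of the relaxed problem (\ref{sdp_form}) does not yield one of the original problem over the vectors ${\bf w}_u$, so that step of your argument does not go through as written.
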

The proof of the above theorem is based on the fact that the proposed iterative algorithm is equivalent to an MM algorithm which is guaranteed to converge to the stationary point solutions \cite{conv_mm}. While the proposed rank-1 relaxation might not always lead to a rank-one solution, a rank-one solution can be achieved by well-known randomization techniques \cite{convex_opt}. 
\subsection{Computational Complexity Analysis}
The implementation of the proposed iterative algorithm requires to solve the SDP problem (\ref{MM_fin_opt}) with $c=(2B+3U+2)$ SDP constraints and $v=(2U+1)$ SDP variables at each iteration. The second step of the proposed algorithm consists of updating $\eta^{m+1}_b$ and $\beta^{m+1}_b$. The computational complexity comes mainly, therefore, from solving the SDP problem. If the obtained solution does not satisfy the relaxed rank-one constraints, Gaussian randomization techniques \cite{convex_opt} can be applied to estimate a rank-one solution, which adds to the overall algorithmic complexity.


\vspace{-2mm}
\section{Simulation results}
\label{sec:trois}
\begin{figure}
\begin{center}
\rotatebox{0}{\scalebox{0.24}{\includegraphics{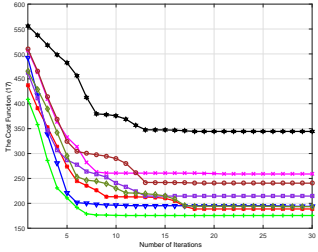}}}
\centering
\caption{Convergence behavior of the Iterative Relaxed MM Algorithm.}
\end{center}
\label{conv_behav1}
\end{figure}
\begin{figure}
\begin{center}
\rotatebox{0}{\scalebox{0.33}{\includegraphics{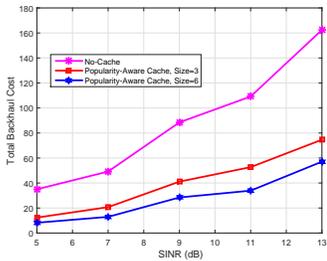}}}
\centering
\caption{The total backhaul cost as function of the target SINR.}
\end{center}
\label{sinr_var_3}
\end{figure}

This section provides simulation examples to illustrate the performance of the proposed algorithm. It considers a cache-enabled CRAN scenario formed by $B=14$ single-antenna cache-enabled BSs, where each BS is connected to $U=6$ single-antenna MUs. The BSs and MUs are uniformily and independently distributed in the square region $[0~1000] \times [0~1000]$ meters. Further, the estimated channel vectors are generated using Rayleigh fading component and a distance-dependent path loss, modeled as $L(d_{bu})= 128.1+37.6\textrm{log}_{10}(d_{bu})$, where $d_{bu}$ denotes the distance between BS $b$ and user $u$ in kilometers. Each user across the network randomly requests one content from the BSs according to the content popularity, modeled as Zipf distribution with skewness parameter 1. The noise power spectral density is set to $\sigma_u^2=-98~\textrm{dBm}$ $\forall u$. We set the maximum transmit power of BS $b$ to $P_{b}=1$ Watt, the per-BS backhaul capacity limit to $C_b=10$ Mbps, $\forall b$, the relative power consumption to $P_{rb}=38$ Watts, $\nu_b=2.5$ and the accuracy matrix ${\bf E}_{u}=\frac{1}{a}{\bf I}_{B}$ where $a > 0$. $\eta^{0}_b$ and $\beta^{0}_b$~$\forall b\in\mathcal{B}$ are initially all set to 1. All provided simulations are rank-one solutions.
%


First, the SINR target is set to $\delta_u=10\mathrm{dB}$~$\forall u\in\mathcal{U}$, the positive constant $a$ to $0.01$ and $\epsilon=10^{-6}$. Figure \ref{conv_behav1} shows the convergence behavior of the proposed iterative algorithm for different channel realizations. It can be noticed that the proposed algorithm converges for all the considered channel realizations. Figure \ref{conv_behav1} further shows that the proposed algorithm has a reasonable convergence speed (around 20 iterations) for the considered channel realizations. It is especially remarkable that the cost function, i.e., function (\ref{obj_app}), is always driven downhill.


 Consider now that users across the network request different contents, where the network is formed by $B=12$ BSs and $U=8$ MUs, where half of the scheduled users request a common content, and the other half randomly request one content. Figure \ref{sinr_var_3} shows the total backhaul cost versus the target SINR. It can be noticed that the cache-enabled network significantly reduces the backhaul cost especially at high SINR, as compared to the network without cache. It is remarkable how increasing the cache size significantly reduces the total backhaul cost, which enlightens the role of proactive caching in future networks.
%

\vspace{-3mm}
\section{Conclusion}
\label{sec:quatre}
Proactively caching popular content is expected to play a major role in alleviating backhaul congestion problems. This paper considers the downlink of a cache-enabled CRAN, where each cache-enabled single-antenna BS serves a pre-known set of single-antenna MU. The paper assumes that only imperfect CSI is available at the cloud, and each BS is connected to the cloud through limited-capacity backhaul link. The paper provides an iterative algorithm to solve the total network power and backhaul cost minimization problem. Simulation results show that the cache-enabled network significantly reduces the backhaul cost especially at high SINR as compared to networks without cache.
\vspace{-3mm}
\bibliographystyle{IEEEtran}
\bibliography{IEEEabrv,reference}

\ifCLASSOPTIONcaptionsoff
  \newpage
\fi

\end{document}